\def \fsize{10cm}
\def \fext{pdf}
\def \vect#1{\mbox{\boldmath $#1$}}
\newcommand{\argmax}{\mathop{\rm arg~max}\limits}
\newcommand{\rif}{{\rm ~~if~}}
\newcommand{\md}{\mathcal}
\newtheorem{definition}{Definition}
\newtheorem{theorem}{Theorem}
\algnewcommand\Input{\item[{\textbf{Input:}}]}
\algnewcommand\Output{\item[{\textbf{Output:}}]}
\algrenewcommand\algorithmicdo{}
\algnewcommand\algorithmicto{\textbf{to}}
\let\oldReturn\Return
\renewcommand{\Return}{\State\oldReturn}
\title{Linear Game Theory : Reduction of complexity by decomposing large games into partial games}
\author{%
  Tatsuya Iwase \\
  %\thanks{Financial support from the Jan Wallander and Tom Hedelius foundation is gratefully acknowledged}\\
  \small Social Systems Research-Domain\\
  \small Toyota Central R\&D Labs., Inc. \\
  \small\texttt{tiwase@mosk.tytlabs.co.jp}
  \and
  Takahiro Shiga\\
  \small Social Systems Research-Domain\\
  \small Toyota Central R\&D Labs., Inc. \\
  \small\texttt{t-shiga@mosk.tytlabs.co.jp}
}
\date{}
\begin{document}
  \maketitle

\begin{abstract}
	With increasing game size, a problem of computational complexity arises. This is especially true in real world problems such as in social systems, where there is a significant population of players involved in the game, and the complexity problem is critical. Previous studies in algorithmic game theory propose succinct games that enable small descriptions of payoff matrices and reduction of complexities. However, some of the suggested compromises lose generality with strict assumptions such as symmetries in utility functions and cannot be applied to the full range of real world problems that may be presented. Graphical games are relatively promising, with a good balance between complexity and generality. However, they assume a given graph structure of players' interactions and cannot be applied to games without such known graphs. This study proposes a method to identify an interaction graph between players and subsequently decompose games into smaller components by cutting out weak interactions for the purpose of reducing complexity. At the beginning, players' mutual dependencies on their utilities are quantified as variance-covariance matrices among players. Then, the interaction graphs among players are identified by solving eigenvalue problems. Players' interactions are further decomposed into linear combinations of games. This helps to find a consistent equilibrium, which is a Nash equilibrium specified by the decomposition, with reduced computational complexity. Finally, experiments on simple example games are shown to verify the proposed method.
\end{abstract}
\newpage

\begin{hdn}
	\color{red}
	\begin{verbatim}
	【claim】利得変化をみることで自動的にggの構造を推定・分解できる。もしうまい分解が与えられれば、NE計算早くなる
	うまくかけてない
	・均衡を変えない分解である…？
	・succinctの参考文献追加
	理論的に不明
	・一貫均衡ってすでにある精緻化概念のサブクラスだったりしない？？：分解するってのは対戦相手の順番を変えるってことでは？？つまりサブゲー完全？サブゲー(部分ゲー)という概念自体、展開型の話だから、ちょっと違う。プレイヤ数変わんないし
	・一貫均衡の定義は？ある分解について？それとも分解が存在するもの？
	・分解できないトレードオフ均衡ってそんざいする？
	・vkはグループ？1プレイヤ？各vkに複数プレイヤが入ってる場合の意味。計算上は複数プレイヤ
	\end{verbatim}
	\begin{comment}
	・書き出し。なんで社会システム？googleにアピールできるような。ぼんやりしてしまう
	・biblioフォーマット
	・数式番号の参照は \ref ではなく，\eqref
	・QAずれてない？ゲーム分割とはinfluencerの推定だよ、とかけてる？
	・一貫均衡のことをoverviewやsummaryでも押す
	・図がdirected graphではない
	・固有ゲームって用語は使わないほうがいいな…
	\end{comment}
	\color{black}
\end{hdn}

\section{\label{sec_intro}Introduction}

\begin{hdn}
	\color{red}
	\begin{verbatim}
	#G:社会シスはゲー理論で記述できる
	#社会シス同士も相互作用
	#しかも人口多い。大規模げーだ
	#現実の社会現象をげー理論で解決したい。現実都市・交通問題は大規模。ふつう解けない
	#予測→メカデザ・シグナリングで最適へ誘導
	#シェアげー軽く引用
	#何を計算するか。解の種類と計算量
	#P:大規模げーは計算遅い
	\end{verbatim}
	\color{black}
\end{hdn}

Lately, the impact of sharing economies such as Uber and Airbnb have started to replace the traditional economic systems. Consequently, a new effective technology is required for designing social systems such as cities and transportation. Since social systems consist of interactions between people and can be modeled as games between them, game theory can be a fundamental tool to analyze and optimize those systems. For instance, we modeled ride sharing services as a game among people who share vehicles, and studied a coordination mechanism to realize effective use of shared vehicles\cite{rsgame}.

%Generally, a game among selfish people results in a Nash equilibrium(NE), which produces ineffective allocation of resources compared to social optimal(SO) allocation\cite{poa}. Based on the game theory, we can design a coordination mechanism that moves NE toward SO by giving people incentives such as money or law.

However, when the size of games increases, computational complexity also increases. Generally speaking, a solution of a game such as Nash equilibrium or social optimum is denoted as a certain strategy profile that is a combination of all players' choices. Then, the computational complexity of a game depends on the size of the search space of solutions which increases exponentially according to the number of players. Especially in games of social systems, the population of involved players is usually huge, and the complexity problem is critical.

This study proposes a technique that reduces the size of games and accordingly their computational complexity, by decomposing large games into smaller components.

\subsection{\label{sec_past}Related work}

\begin{hdn}
	\color{red}
	\begin{verbatim}
	%B:高速化手法。あるごげー。CG
	%２人とN人
	%succinct game
	%n人：succinct:利得行列を簡素に。ふつうはna^n。むしろsuccinctでなければ解けない。
	%独立・対称
	%課題
	%一般性
	polyは2人でないとダメ
	%symmet
	%CGはシェア・内政変数ダメ
	%利得行列はsuccinct使えば簡単になる。計算量は別
	%gg構造わからん
	%P:構造はgiven。大規模だと手におえない
	%相互作用はスパースなのでは。分割できれば…
	%ゲーム表現を一般的かつコンパクトにしたい。トレードオフがある！
	\end{verbatim}
	\color{black}
\end{hdn}

The complexity problem in game theory has been studied for years, and the accumulation of knowledge is systematized as one of the main topics of ``Algorithmic Game Theory"\cite{algogt}. Roughly speaking, the computational complexity theory in algorithmic game theory has two subfields, namely the theory of 2 players and the theory of $N(>2)$ players. The latter is the focus of this study. In games of $N$ players, where the number of choices of each player is $A$, the size of the payoff matrix is $\md{O}(NA^N)$. Hence, it is difficult to give a full description of such a payoff matrix in large games that have big $N$s. Thus, the computational complexity theory only focuses on succinct games that can avoid full descriptions of payoff matrices and allow brief representations of large games. The following are major succinct games:

\begin{itemize}
	\item Games exploiting independencies between players
	\begin{itemize}
		\item Graphical games. These are games that have graphical descriptions of dependencies between players' utilities\cite{gg}. These allow a full description of all players' utilities with only payoff matrices among players that have mutual interactions.
		\item Polymatrix games. These are games that decompose the description of $N$-player utilities into the sums of utilities of 2-player games\cite{poly}. 
		\item Sparse games. These are games with most of the elements of the payoff matrices having zero values\cite{sparse}.
	\end{itemize}
	\item Games exploiting symmetries among players
	\begin{itemize}
		\item Symmetric games. These are games where all players have the same utility function\cite{nash}.
		\item Congestion games. These are games where the utility of a player's choice depends only on the number of players selecting that choice\cite{cg}.
		\item Local effect games. This is a generalization of congestion games. The utility of a player depends on the number of players with the same choice and also on the number of players with other choices\cite{leg}.
	\end{itemize}
\end{itemize}

These succinct games allow compact representation of games. Besides, some of the games provide efficient algorithms that calculate equilibria. Meanwhile, succinct games have the drawback of spoiling generality. Succinct games cannot describe many real world problems. For example, only limited problems can satisfy the assumption of symmetric games that all players have the same utility function. Congestion games can model congestions, which are general phenomena in the real world. However, they cannot model complex problems where supplies of resources can change according to players' choices such as car-sharing or ride-sharing\cite{rsgame}.

In addition, simplifying the description of payoff matrices by succinct games does not mean reduction of computational complexity in calculating game solutions. For example, congestion games provide compact descriptions of utility functions which only depend on the number of players. However, if $N$ players have a common subset of choices of size $A$, then the size of the solution space becomes at least $\md{O}(A^N)$ because all of the $N$ players interact with each other. Here, the problem of computational complexity in equilibrium calculation is evident.

It is known that the complexity of calculating Nash equilibrium is PPAD-complete, which means that there is no known general polynomial-time algorithm\cite{ppad}. However, previous works have been trying to propose relatively efficient algorithms. The Lemke-Howson algorithm formulates the calculation of Nash equilibrium of two players as a linear complementary problem\cite{lemke}. In general cases of more than two players, the problem becomes multilinear and then more complex\cite{wilson,datta}. In the case of polymatrix games, the problem of more than two players maintains linearity, but diminishes generality. They are all non-polynomial problems, which then need techniques to tackle large games.

Some works propose scalable algorithms for equilibrium computation by utilizing specific structures of games. Abstraction of games is a technique to reduce the size of the games and their complexity of equilibrium computation\cite{abst}. The original large game is converted into a coarse smaller game by grouping some elements of games such as types and actions of the players or stages of the games. However, an assumption on partial ordering of player types must be satisfied to guarantee the equivalence of equilibria between original and abstracted games. In case of games with temporal structures, the sequence form representation is applied to calculate equilibrium of two players efficiently by formulating linear programs or linear complementary problems\cite{seq}. While the sequence form assumes perfect recall of players, a modeling technique of player strategies with Markov Decision Processes (MDPs) enables the relaxation of the assumption\cite{branis,mdp}. Its temporal scalability may possibly reduce the complexity further in combination with the population scalability of our proposed method.

In the meantime, graphical games keep a good balance between reduction of complexity and generality of the model\cite{gg} and can possibly be applied to many real world problems. Graphical games, as the name suggests, describe the structure of a game in a directed graph. Nodes of a graph represent players, and the node of one player has an edge connected to another player when the utility of the destination player is dependent on the choice of the origin player. In this paper, we call the graph an interaction graph of the game. Therefore, the game can be described with small payoff matrices including only neighbor players instead of the full size payoff matrix of $N$ players (Figure \ref{fig:gg}). Directions of edges are omitted in figures if both directions exist. Graphical games also provide algorithms for equilibrium calculations using similar algorithms as information propagation in graphs which are well studied in the machine learning field. However, graphical games assume a given graph structure and cannot be applied to large complex games if the interactions of players are unknown.

Large complex games are intractable in both representation of payoff matrices and computational complexity of solutions. However, many real world problems do not involve interactions of all players, but rather a set of interactions between a small group of players, as in graphical games. If the interactions of players can be detected automatically and large complex games can be decomposed into a set of small games, the representation of payoff matrices can be simple, and computational complexity can be reduced.

\begin{figure}[ht]
	\centering
	\includegraphics[clip,width=\fsize]{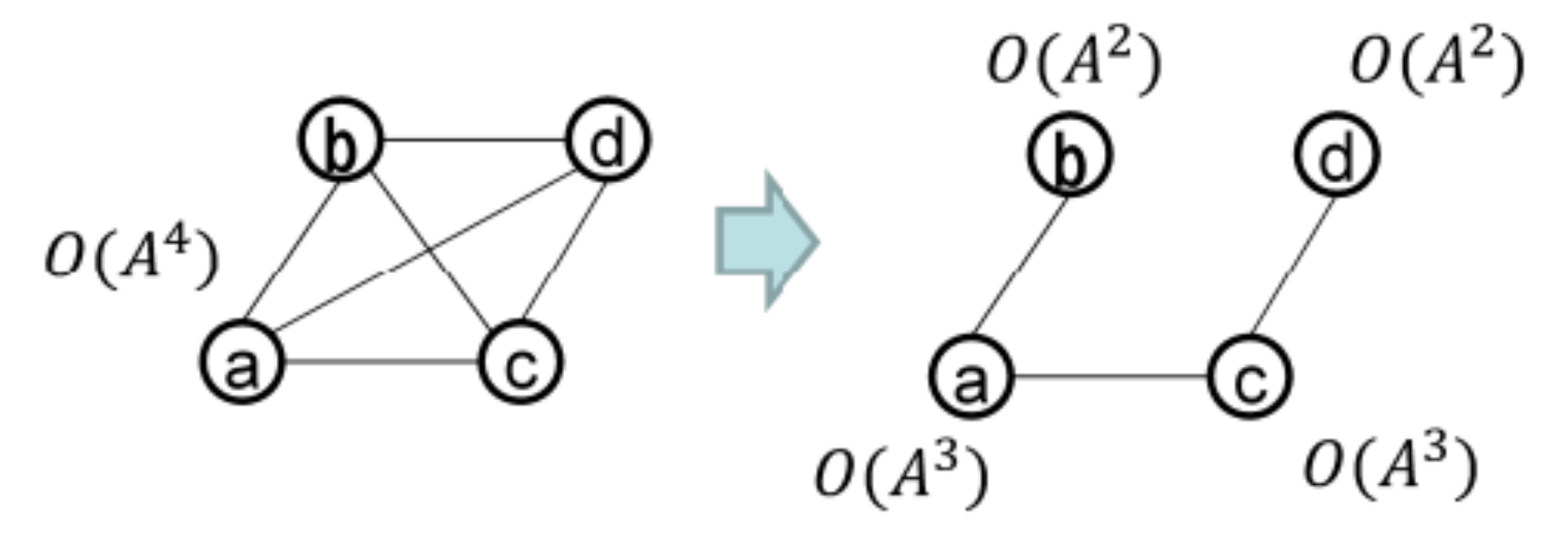}
	\caption{Graphical game}
	\label{fig:gg}	
\end{figure}

\subsection{\label{sec_goal}Purpose of the study}

\begin{hdn}
	\color{red}
	\begin{verbatim}
	A:仮説。相互作用はスパース、な場合もある→はじめに
	%でかいのそのまま解くのは無理→独立な部分に分割してから最適化しようぜ
	\end{verbatim}
	\color{black}
\end{hdn}

The purpose of this work is to decompose large complex games into sets of small games for reduction of computational complexity. We propose a method that covers a wide range of game classes.

%%%%%%%%%%%%%%%%%%%%%%%%%%%%%%%%%%%%%%%%%%%%%%%%%%%%%%%%%%%%%
\section{\label{sec_approach}Approach}

\subsection{\label{sec_problem}Problem of game decomposition}

\begin{hdn}
	\color{red}
	\begin{verbatim}
	%ゲーム分割、とは？
	%入力
	%あるプレイヤのu(a)を分解する、とは＝影響する小さいaを見つけること
	%分割ゲームと結合ゲーム定義
	%uk(a)=uk(ak)
	%ak/nkの定義：ｇｇの表記をまねる
	\end{verbatim}
	\color{black}
\end{hdn}

In this section, we formulate our problem of decomposing games. Our basic assumption is that not all players interact with each other, and the utility of a player depends on choices of only a limited number of players. Generally, each player has his own utility function and the set of players who have influence on the utility function is different for each player. Therefore, decomposition of a game means identifying the limited number of players who have influence on the utility function of each player.

A problem has an input game $\md{G}=<\md{N},\md{A},\vect{u}>$ to be decomposed. $\md{N}=\{1,...,N\}$ is a set of players, and $\md{A}_{i}$ is the set of choices of player $i \in \md{N}$. $a_{i} \in \md{A}_{i}$ is a choice of player $i$. $\md{A}=\underset{i \in \md{N}}{\times}\md{A}_{i}$ is a set of combinations of all players' choices. $\vect{a}=(a_{1},...,a_{N}) \in \md{A}$ is a combination of all players' choices called a strategy profile. $\vect{a}_{-i}$ denotes a strategy profile of all players except for $i$.

$u_{i}:\md{A} \to \mathbb{R}$ is the utility function of player $i$ and $\vect{u}=(u_{1},...,u_{N})$ is the utility vector function of all players. We use $u_{i}(a_{i}, \vect{a}_{-i})$ to denote $u_{i}(\vect{a})$ when it is more convenient. The shape of $\vect{u}$ is arbitrary, but its representation must be succinct because the input game $\md{G}$ is large. We assume that it is possible to calculate the utility of a player $\vect{u}(\vect{a})$ with a given profile $\vect{a}$, but interaction graphs are not given as in graphical games. Hence, a search of the whole profile space $\md{A}$ is necessary to calculate a solution of $\md{G}$, and it is computationally intractable. This is because the utility function $u_{i}(\vect{a})$ takes a strategy profile of all players as input. Instead, we should identify a limited number of players who have influence on $u_{i}$ to reduce complexity.

\begin{definition}[Partial profile]
	Given a subset of players $\md{N}' \subset \md{N}$, $\vect{a}_{\md{N}'}=(...,a_{(\md{N}',i)},...) \in \md{A}_{\md{N}'}$ is a partial profile of $\vect{a} \in \md{A}$ iff $\md{A}_{\md{N}'}=\underset{i \in \md{N}'}{\times}\md{A}_{i}$ and $a_{(\md{N}',i)}=a_{i},\forall i \in \md{N}'$. This is denoted as $\vect{a}_{\md{N}'} \subset \vect{a}$ for simplicity.
	\label{def:partiala}
\end{definition}
\begin{definition}[Influencer]
	Player $j$ is an influencer of player $i$ iff $u_{i}$ depends on $j$'s choice. The set of influencers for player $i$ is denoted as $\md{N}^i$. We denote as follows for simplicity:
	\begin{equation}
		u_{i}(\vect{a})=u_{i}(\vect{a}_{\md{N}^i}).
	\end{equation}
	\label{def:influencer}
\end{definition}
\begin{definition}[Partial game]
	Given a subset of players $\md{N}_k \subset \md{N}$, $\md{G}_k=<\md{N}_k,\md{A}_{\md{N}_k},\vect{u}_{k}>$ is a partial game of $\md{G}=<\md{N},\md{A},\vect{u}>$ iff
	\begin{equation}
		\left.
		\begin{array}{l}
			\vect{u}_{k}=(...,u_{(k,i)},...)_{i \in \md{N}_k}, \\
			u_{(k,i)}(\vect{a}_{\md{N}_k})=u_i(\vect{a}_{\md{N}_k}), \forall i \in \md{N}_k, \forall \vect{a}_{\md{N}_k} \in \md{A}_{\md{N}_k}.
		\end{array}
		\right.
	\end{equation}
	\label{def:partg}
\end{definition}
Now decomposition of game $\md{G}$ is defined as follows.
\begin{definition}[Decomposed games]
	Partial games $\{\md{G}_k=<\md{N}_k,\md{A}_{\md{N}_k},\vect{u}_{k}>|k \in \md{K}=\{1,...,K\}\}$ are a decomposition of $\md{G}=<\md{N},\md{A},\vect{u}>$ iff
	\begin{equation}
		\md{N}=\underset{k \in \md{K}}{\cup}\md{N}_{k}.
	\end{equation}
	\label{def:decomg}
\end{definition}
In reverse, the game $\md{G}$ is the combined game of $\{\md{G}_k\}$.

The goal of the game decomposition problem is to identify influencers $\md{N}^{i}$ for all players and subsequently decompose the input game $\md{G}$ if possible. How can this be solved?

\subsection{\label{sec_overview}Overview of approach}

\begin{hdn}
	\color{red}
	\begin{verbatim}
	%アプローチ全体概要：不明なのを、相互作用検出・構造推定して、分割
	%二段階の分割、二種類の切断・分裂
	%相対的に弱いのを切る
	claim
	\end{verbatim}
	\color{black}
\end{hdn}

Our approach has two steps. The first step is finding interactions among players and estimating influencers $\md{N}^{i}$ for all players. This is equivalent to estimating the interaction graph of the game. Our proposed method achieves this first step by constructing variance-covariance matrices among players. Once the matrix of player $i$ is obtained, the influencers $\md{N}^{i}$ are estimated by solving the eigenvalue problem.

After identifying the graph of the game, the second step further decomposes them if possible. This is also done two ways. One way is approximate cutting. Since our approach can reveal the relative strength of interactions between players, weak interactions can be cut to make smaller games. The other way is division of a player. This way decomposes the utility function $u_{i}$ into a linear combination of two small utility functions of player $i$. The following sections describe this in further detail.

\subsection{\label{sec_principal}Principle of detecting interactions between players}

\begin{hdn}
	\color{red}
	\begin{verbatim}
	%差分をとる
	\end{verbatim}
	\color{black}
\end{hdn}

Figure \ref{fig:principle} shows simple 2-player games and describes the principle of how our approach detects interactions between players. There are two payoff matrices of 2 players who have choices of $\{a_{R},a_{L}\}$. The left one shows a case where 2 players have interactions with each other, and the right one shows a case without any interactions between players. In the left case, the payoff of the row player changes according to the choice of the column player. Meanwhile in the right case, the payoff of the row player does not depend on the choice of the column player. Therefore, as shown in the bottom of the figure, the change in payoff when the opponent changes the choice represents the strength of the influence of the opponent's actions on the payoff.

\begin{figure}[ht]
	\centering
	\includegraphics[clip,width=\fsize]{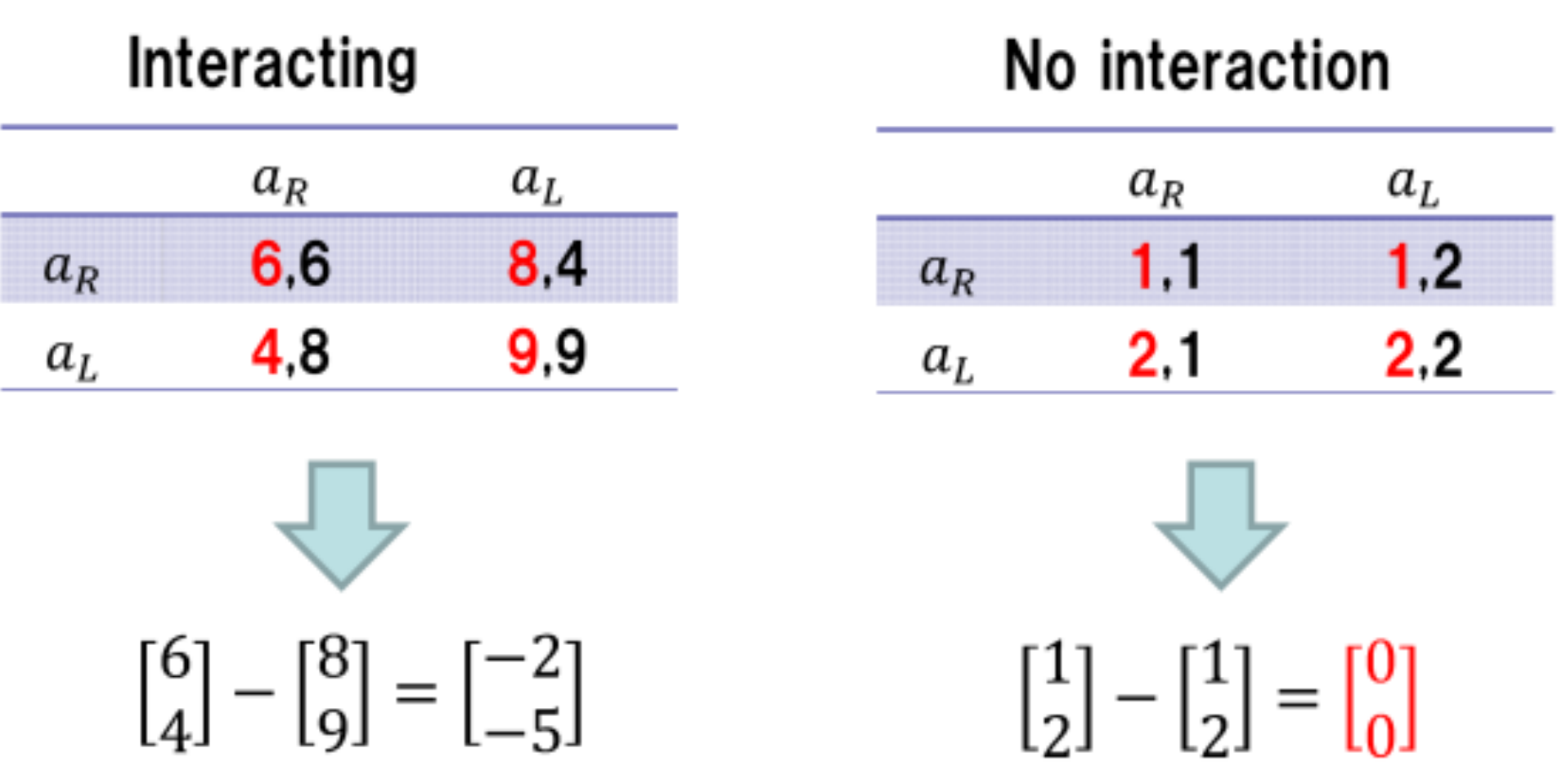}
	\caption{Principle of detecting interactions}
	\label{fig:principle}	
\end{figure}

\subsection{\label{sec_cov}Interaction matrix between players}

\begin{hdn}
	\color{red}
	\begin{verbatim}
	%共分散の作り方
	\end{verbatim}
	\color{black}
\end{hdn}

In this section, we describe how to detect interactions between $N$ players, whereas the last section describes 2-player cases. The proposed method makes concrete the interactions between $N$ players on the utility of player $i$ into a variance-covariance matrix $C_{i} \in \mathbb{R}^{N \times N}$ that we call the interaction matrix of player $i$.

The procedure \textsc{InteractionMatrix} given in Algorithm \ref{alg:imat} computes the matrix. A game instance $\md{G}=<\md{N},\md{A},\vect{u}>$ is given as the input, but the interactions among players are unknown. A difference of payoff $\Delta u_{i}$ is calculated by changing the choice of player $j$ between two strategy profiles $\vect{a},\vect{b}$ (line 9), as illustrated in Figure \ref{fig:sampling}. The procedure repeats this step to collect samples of all players. Then the output $C_{i}$ has the information of the interactions between players on the utility of player $i$. The size of data needed to calculate $C_{i}$ for all players is at most $\md{O}(LN^2)$.

\begin{comment}
The following steps show how to calculate the matrix.

\begin{enumerate}
\renewcommand{\labelenumi}{\arabic{enumi}.}
\item An input game $\md{G}=<\md{N},\md{A},\vect{u}>$ is given. Interactions among players are unknown.
\item A difference of payoff $\Delta\vect{u}_{j}=\vect{u}(\vect{b})-\vect{u}(\vect{a})$ is calculated by changing the choice of player $j$ between two strategy profiles $\vect{a},\vect{b}$ (\ref{fig:sampling}).
\item The sampling process above is repeated $L$ times for each player $j \in \md{N}$ by changing the strategy profiles $\vect{a},\vect{b}$ to collect data $\{\Delta\vect{u}_{j,l} | j \in \md{N},l \in \{1,...,L\} \}$.
\item For each vector data $\Delta\vect{u}_{j,l}$, an element $\Delta u_{j,l}^i$ of player $i$ is picked and then a row is added to matrix $X_{i}=[x_{mn}]$ as follows
\begin{equation}
x_{mn}=
\left\{
\begin{array}{ll}
\Delta u_{j,l}^i & \rif n=j \\
0 & \rif n \neq j
\end{array}
\right.
\label{eq:x}
\end{equation}
where $m \in \{1,...,M\}, M=NL, n \in \md{N}$. As a result, a matrix $X_{i} \in \mathbb{R}^{M \times N}$ is obtained for each player $i$.
\item Interaction matrix $C_{i} \in \mathbb{R}^{N \times N}$ is calculated as follows
\begin{equation}
C_{i}=(X_{i}-\bar{X_{i}})^T(X_{i}-\bar{X_{i}})/(M-1).
\label{eq:cov}
\end{equation}
where $\bar{X_{i}}$ is the average of $X_{i}$ along rows.
\end{enumerate}
\end{comment}

\begin{algorithm}                      
	\caption{Interaction matrix}         
	\label{alg:imat}                          
	\begin{algorithmic}[1]
		\Input A game instance $\md{G}=<\md{N},\md{A},\vect{u}>$; A player index $i$; Number of samples for each player $L$
		\Output An interaction matrix $C_{i}$
		\Procedure{InteractionMatrix}{$\md{G},i$}
		\State{Initialize $X_{i} \in \mathbb{R}^{NL \times N}$ as a zero matrix.}
		\State{$k=0.$}
		\For{$j \in \md{N}$}
		\ForTo{$l=1$}{$L$}
		\State{Choose $\vect{a} \in \md{A}$ randomly.}
		\State{$\vect{b} = \vect{a}.$}
		\State{Update $b_{j} \in \md{A}_{j}\setminus\{a_{j}\}$} randomly.
		\State{$\Delta u_{i}=u_{i}(\vect{b})-u_{i}(\vect{a}).$}
		\State{$X_{i}[k][j]=\Delta u_{i}.$}
		\State{$k=k+1.$}
		\EndFor
		\EndFor
		\State{Compute $\bar{X_{i}} \in \mathbb{R}^{NL \times N}$ with identical rows which are the average of $X_{i}$ along rows.}
		\State{$C_{i}=(X_{i}-\bar{X_{i}})^T(X_{i}-\bar{X_{i}})/(NL-1)$.}
		\Return{$C_{i}$}
		\EndProcedure
	\end{algorithmic}
\end{algorithm}

\begin{figure}[ht]
	\centering
	\includegraphics[clip,width=\fsize]{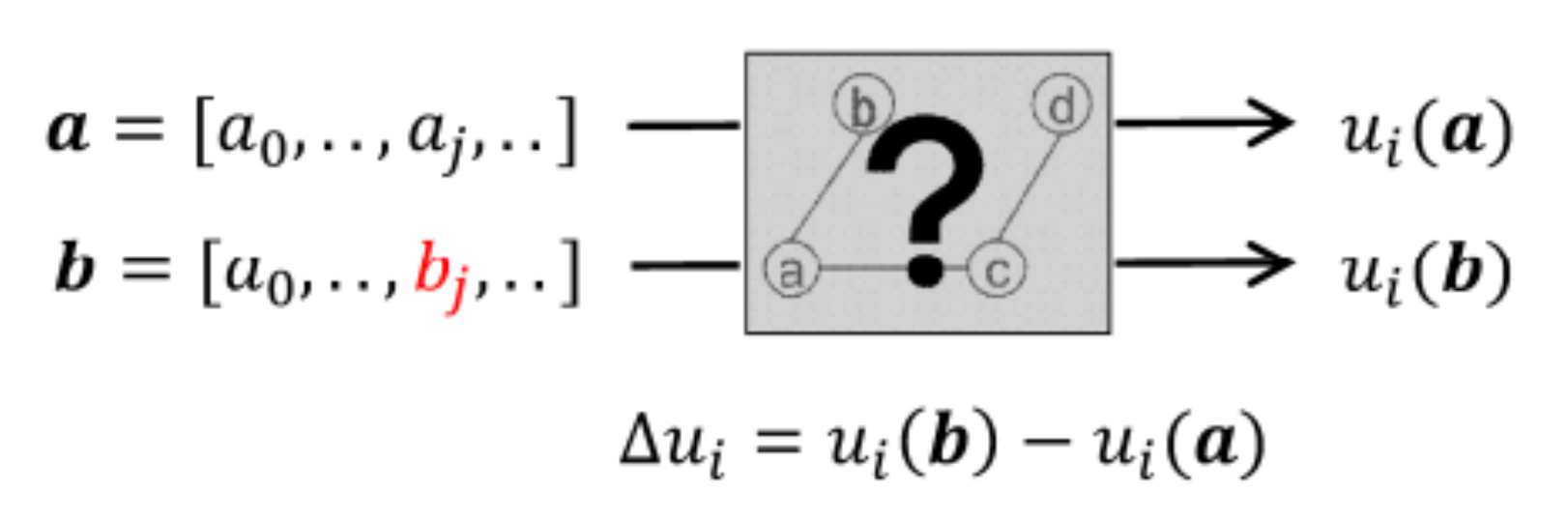}
	\caption{Sampling data of utility differences}
	\label{fig:sampling}	
\end{figure}

\subsection{\label{sec_influ}Estimation of influencers}

\begin{hdn}
	\color{red}
	\begin{verbatim}
	%1.グラフの推定
	%lambda>0なのがinfluencer
	%各プレイヤiごとの話
	%固有ベクトルP：利得発生の元になったある力・相互作用(プレイヤグループ)
	%固有値L：そのプレイヤグループの影響力
	%固有値>0のプレイヤグループでグラフを再構成
	\end{verbatim}
	\color{black}
\end{hdn}

The proposed method identifies influencers $\md{N}^{i}$ in Definition \ref{def:influencer} out of all players $\md{N}$ based on the interaction matrix $C_{i}$ and linear algebra. At first, all players $\md{N}$ are clustered into groups $k \in \md{K}^i =\{1,...,K_{i}\}$ by solving the following eigenvalue problem.

\begin{equation}
	C_{i}\vect{v}_{k}=\lambda_{k}\vect{v}_{k}.
	\label{eq:eigen}
\end{equation}

We consider only nontrivial solutions that have $\lambda_{k} > 0$. An eigenvector $\vect{v}_{k} \in [-1,1]^{N}$ represents a group of players who have influence on the utility of player $i$. If player $j$ is an influencer of player $i$, the element $j$ of $\vect{v}_{k}$ has a non-zero value. The elements of players who are not influencers have all zero values. An eigenvalue $\lambda_{k}$ represents the relative strength of influence of players in $\vect{v}_{k}$. Thus, all influencers $\md{N}^{i}$ are identified by checking $\lambda_{k}$ and $\vect{v}_{k}$. Once $\md{N}^{i}$ are identified for all players $i \in \md{N}$, they contain enough information to draw the interaction graph of the game. The computational complexity is now reduced because the calculation of utility $u_{i}(\vect{a}^i)$ depends only on the profile of influencers $\vect{a}^i$.

\subsection{\label{sec_aprox}Approximate cut of game}

\begin{hdn}
	\color{red}
	\begin{verbatim}
	%2.新ゲームを構成
	%固有値小さいやつ近似もできる
	\end{verbatim}
	\color{black}
\end{hdn}

The eigenvalues and eigenvectors computed in the last section contain more information than the interaction graph of the game by itself. Since they have information on the relative strengths of interactions between players, weak interactions in the game can be identified by setting thresholds of eigenvalues and eigenvectors. Hence, it is possible to decompose the game approximately by cutting out the weak connection in the interaction graph (Figure \ref{fig:cutg}). An example is shown in Section \ref{sec_ggdecom}.

\begin{figure}[ht]
	\centering
	\includegraphics[clip,width=6cm]{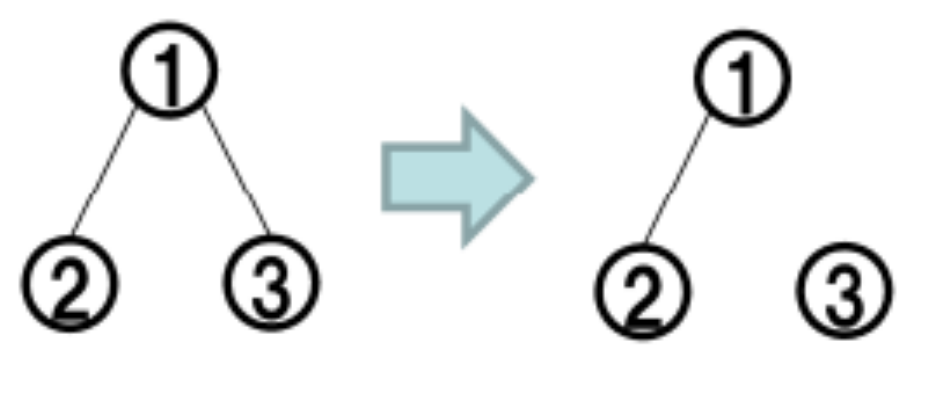}
	\caption{Cut of a game}
	\label{fig:cutg}	
\end{figure}

\subsection{\label{sec_lg}Division of players in linear combinations of games}

\begin{hdn}
	\color{red}
	\begin{verbatim}
	%ゲームの線形結合定義
	%線形結合によるggの更なる分裂
	%一貫均衡
	トレードオフ均衡
	%NEの求め方
	\end{verbatim}
	\color{black}
\end{hdn}

Figure \ref{fig:divp} shows another decomposition of the same game in Figure \ref{fig:cutg}. Different from Figure \ref{fig:cutg}, player 1 belongs to both decomposed games. Even though it changes the interaction graph and properties of the game, this decomposition is useful for reducing certain types of complexities of the original game. This section describes a case of equilibrium computations.

\begin{figure}[ht]
	\centering
	\includegraphics[clip,width=7cm]{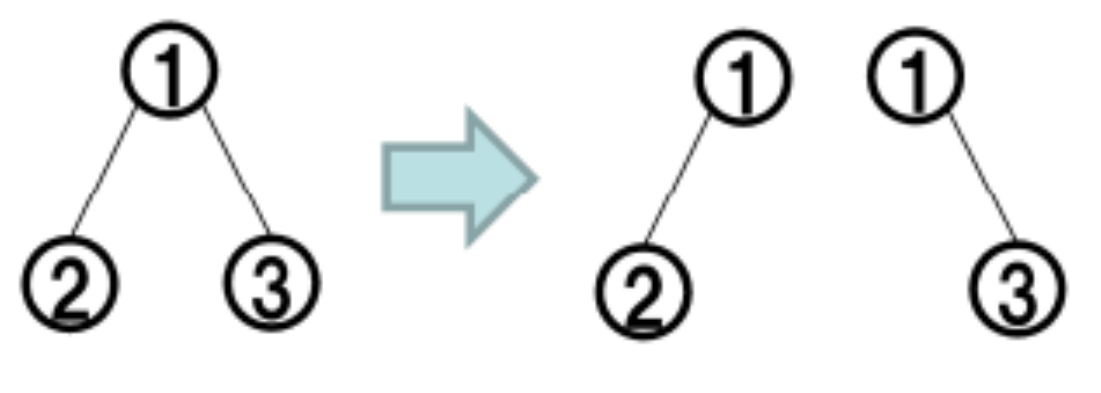}
	\caption{Division of a player}
	\label{fig:divp}	
\end{figure}

\begin{definition}[Nash satisfaction]
	Let $\md{G}_k=<\md{N}_k,\md{A}_{\md{N}_k},\vect{u}_{k}>$ denote a partial game of $\md{G}=<\md{N},\md{A},\vect{u}>$. A strategy profile $\vect{a} \in \md{A}$ satisfies $\md{G}_k$ iff $\vect{a}_{\md{N}_k}^* \subset \vect{a}$ is a Nash equilibrium of $\md{G}_k$.
	\label{def:ns}
\end{definition}

\begin{definition}[Linear decomposition of game]
	Let $\{\md{G}_k=<\md{N}_k,\md{A}_{\md{N}_k},\vect{u}_{k}>\}$ denote a decomposition of $\md{G}=<\md{N},\md{A},\vect{u}>$. $\{\md{G}_k\}$ is a linear decomposition of $\md{G}$ iff
	\begin{equation}
		\vect{u}(\vect{a})=\sum_{k \in \md{K}}b_{k}\vect{u}_{k}(\vect{a}_{\md{N}_k}),
		\label{eq:ldg}
	\end{equation}
	where $b_{k}$ is a constant. In reverse, $\md{G}$ is a linear combination of $\{\md{G}_k\}$.
	\label{def:ldg}
\end{definition}

The following theorem helps to find a Nash equilibrium of combined games by finding Nash equilibria of linearly decomposed games.

\begin{theorem}
	Let $\{\md{G}_k\}$ denote a linear decomposition of $\md{G}$ with $b_{k} \ge 0, \forall k \in \md{K}$. If a profile $\vect{a}$ satisfies all $\{\md{G}_k\}$, $\vect{a}$ is a Nash equilibrium of $\md{G}$.
	\label{thm:conse}
\end{theorem}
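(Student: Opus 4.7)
The plan is to show the Nash equilibrium property for $\vect{a}$ in $\mathcal{G}$ by fixing an arbitrary player $i \in \mathcal{N}$ and an arbitrary alternative choice $a'_i \in \mathcal{A}_i$, and then proving that $u_i(a_i,\vect{a}_{-i}) \ge u_i(a'_i,\vect{a}_{-i})$. Since the decomposition covers all players ($\mathcal{N}=\bigcup_k \mathcal{N}_k$), the set $\mathcal{K}^i=\{k \in \mathcal{K} : i \in \mathcal{N}_k\}$ is non-empty, which is the only structural fact I will need about $i$.

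First, I would interpret the vector identity in Definition \ref{def:ldg} componentwise. Reading off the $i$-th entry, the utility of player $i$ in $\mathcal{G}$ is
\begin{equation}
u_i(\vect{a}) \;=\; \sum_{k \in \mathcal{K}^i} b_k\, u_{(k,i)}(\vect{a}_{\mathcal{N}_k}),
\end{equation}
because for $k \notin \mathcal{K}^i$ the function $u_{(k,i)}$ does not exist as a component of $\vect{u}_k$ and contributes nothing. Next I would unilaterally change $a_i$ to $a'_i$ on both sides; crucially, for each $k \in \mathcal{K}^i$ the partial profile $\vect{a}_{\mathcal{N}_k}$ changes only in its $i$-th coordinate, while for $k \notin \mathcal{K}^i$ the partial profile is untouched (and does not appear in the sum anyway).

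Now I would invoke Nash satisfaction. Since $\vect{a}$ satisfies $\mathcal{G}_k$, the partial profile $\vect{a}_{\mathcal{N}_k}$ is a Nash equilibrium of $\mathcal{G}_k$, so for every $k \in \mathcal{K}^i$,
\begin{equation}
u_{(k,i)}\!\left(a_i,\vect{a}_{\mathcal{N}_k\setminus\{i\}}\right) \;\ge\; u_{(k,i)}\!\left(a'_i,\vect{a}_{\mathcal{N}_k\setminus\{i\}}\right).
\end{equation}
Multiplying each of these inequalities by the nonnegative weight $b_k$ and summing over $\mathcal{K}^i$ preserves the direction of the inequality, yielding $u_i(a_i,\vect{a}_{-i}) \ge u_i(a'_i,\vect{a}_{-i})$. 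Since $i$ and $a'_i$ were arbitrary, $\vect{a}$ is a Nash equilibrium of $\mathcal{G}$.

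The argument is essentially an appeal to the fact that a nonnegative linear combination of inequalities preserves their direction, so there is no deep obstacle. The only subtle point — and the main thing to be careful about — is the bookkeeping surrounding Definition \ref{def:ldg}: the vectors $\vect{u}$ and $\vect{u}_k$ have different lengths, so one must commit to the componentwise reading in which $u_{(k,i)}$ contributes to $u_i$ exactly when $i \in \mathcal{N}_k$. I would also briefly remark that the nonnegativity assumption $b_k \ge 0$ is essential: without it, a negatively weighted partial game would reverse its inequality and the termwise reasoning would collapse, which also suggests where the theorem cannot be strengthened.
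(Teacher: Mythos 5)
Your proposal is correct and follows essentially the same route as the paper's proof: use Nash satisfaction of each partial game to get a per-game optimality condition for player $i$, then observe that a nonnegative linear combination of functions each maximized at $a_i$ is itself maximized at $a_i$. Your version is in fact more careful than the paper's, which states the argument in \(\argmax\) form and glosses over the componentwise bookkeeping (restricting the sum to the partial games containing player $i$) that you make explicit.
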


\begin{proof}
	If $\vect{a}=(a_{1},...,a_{N})$ satisfies all $\{\md{G}_k\}$, the following conditions of Nash equilibrium are satisfied:
	\begin{equation}
		a_{i} = \argmax_{a'_{i} \in \md{A}_{i}}u_{(k,i)}(a'_{i},\vect{a}_{(\md{N}_k,-i)}), \forall k \in \md{K}, \forall i \in \md{N}_k.
		\label{eq:pfconse}
	\end{equation}
	Since $\vect{a}$ maximizes $\vect{u}_{k}$ for all $k \in \md{K}$, $\vect{a}$ also maximizes their positively weighted sum $\vect{u}$ in \eqref{eq:ldg} as follows
	\begin{equation}
		a_{i} = \argmax_{a'_{i} \in \md{A}_{i}}u_i(a'_{i},\vect{a}_{-i}), \forall i \in \md{N}
		\label{eq:pfconse2}
	\end{equation}
	which means that $\vect{a}$ is a Nash equilibrium of $\md{G}$.
\end{proof}

Now a decomposition can specify a Nash equilibrium that is defined as follows:

\begin{definition}[Consistent equilibrium]
	Let $\{\md{G}_k\}$ denote a decomposition of $\md{G}$. A Nash equilibrium $\vect{a}^* \in \md{A}$ of $\md{G}$ is a consistent equilibrium of $\{\md{G}_k\}$ iff $\vect{a}^*$ satisfies all $\{\md{G}_k\}$.
	%Let $\{\md{G}_k=<\md{N}_k,\md{A}_{\md{N}_k},\vect{u}_{k}>\}$ denotes a decomposition of $\md{G}=<\md{N},\md{A},\vect{u}>$.
	%A Nash equilibrium $\vect{a}^* \in \md{A}$ of $\md{G}$ is consistent with $\{\md{G}_k\}$ iff $\vect{a}^*$ satisfies all $\{\md{G}_k$.
	\label{def:conse}
\end{definition}

Hence, if we decompose a game linearly as \eqref{eq:ldg}, it can help to find a consistent equilibrium of the original combined game in accordance with Theorem \ref{thm:conse}. Since the number of players decreases in each decomposed game, the complexity required to find equilibria is also reduced. This approach is similar to the polymatrix games, but can cover a broader class of games. An example is shown in Section \ref{sec_lgdecom}.

%%%%%%%%%%%%%%%%%%%%%%%%%%%%%%%%%%%%%%%%%%%%%%%%%%%%%%%%%%%%%
\section{\label{sec_example}Examples}

\subsection{\label{sec_ggdecom}Example of identifying an interaction graph}

\begin{hdn}
	\color{red}
	\begin{verbatim}		
	%R:ブラックボックスげー。何を知っているか
	%分割できた
	%近似も可能
	\end{verbatim}
	\color{black}
\end{hdn}

In this section, the method proposed in Section \ref{sec_approach} is verified to see if it can identify the interaction graph of an input game $\md{G}=<\md{N},\md{A},\vect{u}>$. Figure \ref{fig:orig} shows the true interaction graph of $\md{G}$ which is unknown to the proposed method. The verification compared the estimation of the proposed method with the true graph. $\md{N}$ includes 24 players and a choice of each player $i \in \md{N}$ is binary $\md{A}_{i}=\{0,1\}$. The player $i$'s utility function $u_i(\vect{a})$ is generated by uniform random sampling of a value from integers $\{0,...,9\}$ for all partial profiles of influencers $\vect{a}_{\md{N}^i} \in \md{A}_{\md{N}^i}$. If the interaction graph is unknown, the complexity of equilibrium computation is $|\md{A}|=2^{24}$.

According to the method in Section \ref{sec_cov}, $L=10$ sample data are collected for each player $i$ and then the interaction matrix $C_{i}$ and eigenvectors $\vect{v}_{k}, k \in \md{K}_{i}=\{1,...,K_{i}\}$ are calculated. Subsequently, the interaction graph can be estimated by regarding player $j$ who has $|v_{k,j}|>0$ in any of the $K_{i}$ eigenvectors as an influencer of player $i$. The proposed method correctly estimated the true interaction graph as in Figure \ref{fig:orig}. Additionally for example, the interaction matrix of player 10 has three eigenvectors which correspond to player 0, 10 and 20 respectively, who are all influencers of player 10 (Figure \ref{fig:eigenv}).

The size of sample data required for this estimation is $10*24^2$ which is smaller than the complexity without estimation $2^{24}$. In addition, the input game is decomposed into two partial games including 12 players each which reduces the complexity of equilibrium computation into $2^{12}$ at most. Furthermore, the biggest groups of players interacting with each other are two groups that have 5 players including players 0 and 23 respectively, because of which an efficient computation can be expected by applying the algorithm of graphical games.

Moreover, the eigenvalues $\lambda_{k}$ and eigenvectors $\vect{v}_{k}$ include more detailed information on the relative strengths of the interactions. According to the method in Section \ref{sec_aprox}, this information can be used for further approximate decomposition of the game. Figure \ref{fig:aprox} shows the result of decomposition that is a graph including only strong interactions of $\lambda_{k}>0.29, |v_{k,j}|>0.05$. The arrows in the figure show the cut edges. As a result, the game is decomposed into 4 small games which realizes further improvement in computation efficiency.

\begin{figure}[ht]
	\centering
	\includegraphics[clip,width=\fsize]{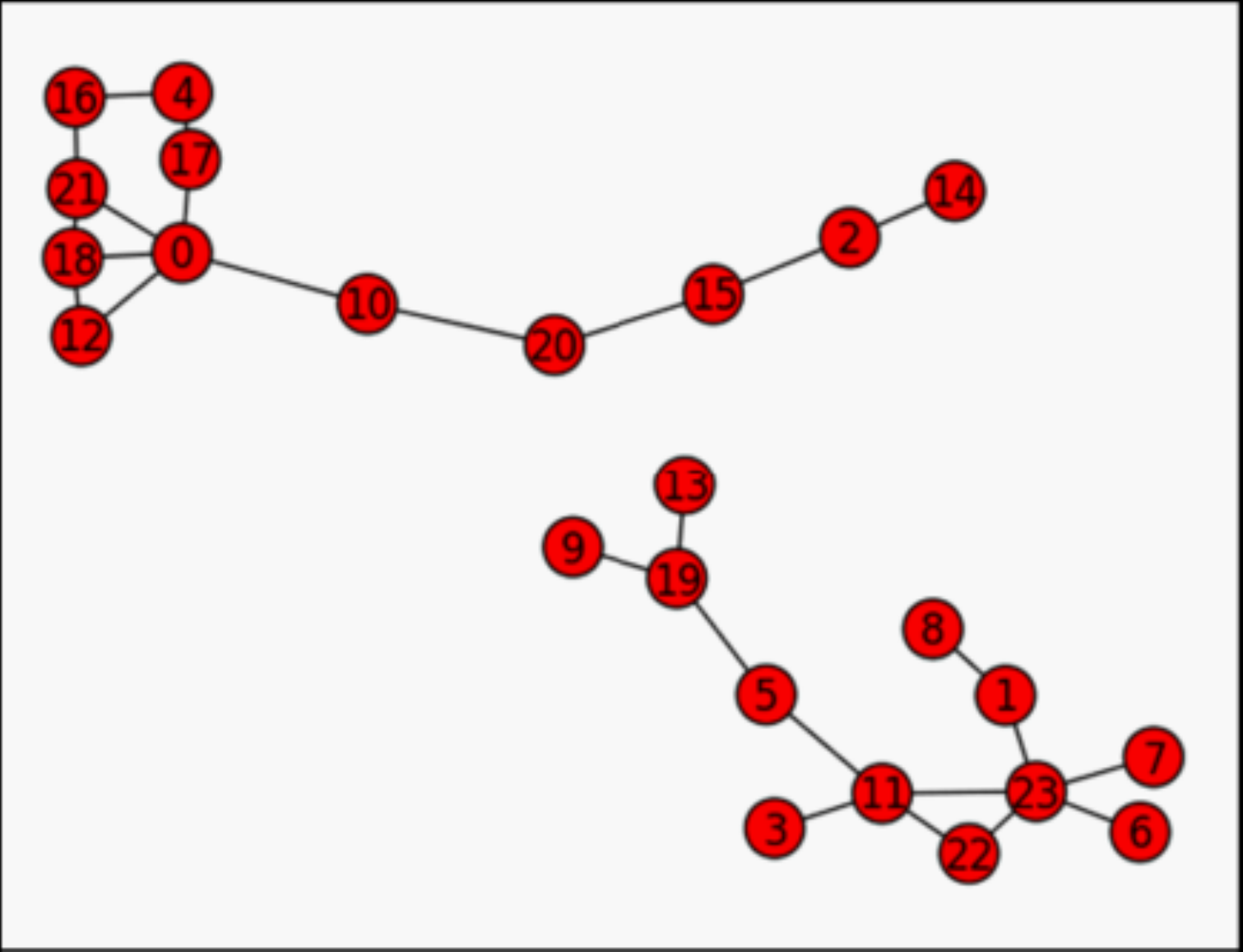}
	\caption{True interaction graph}
	\label{fig:orig}	
\end{figure}

\begin{figure}[ht]
	\centering
	\includegraphics[clip,width=6cm]{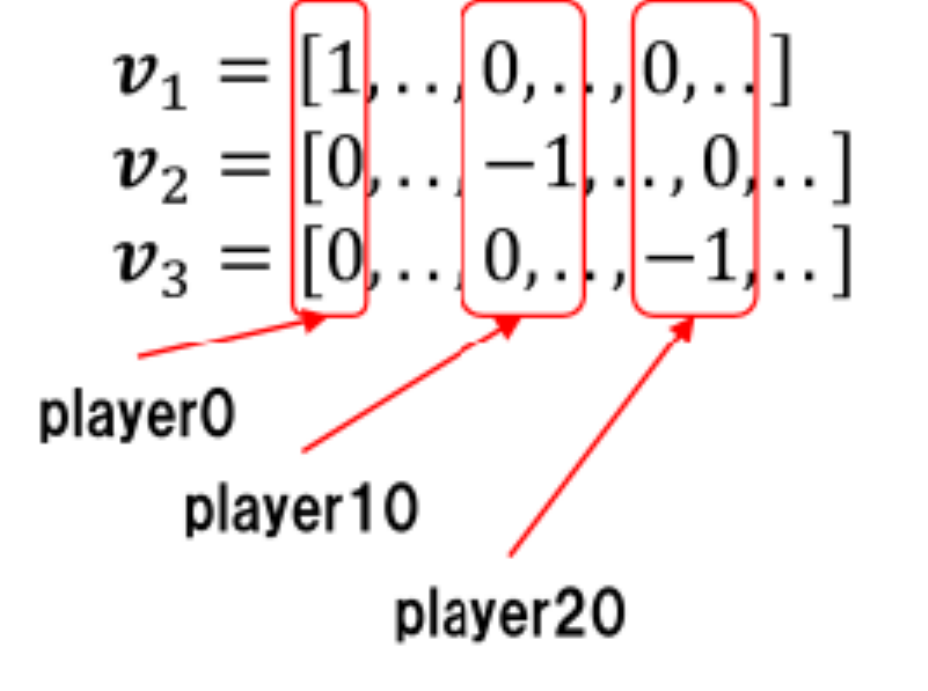}
	\caption{eigenvectors of player 10}
	\label{fig:eigenv}	
\end{figure}

\begin{figure}[ht]
	\centering
	\includegraphics[clip,width=\fsize]{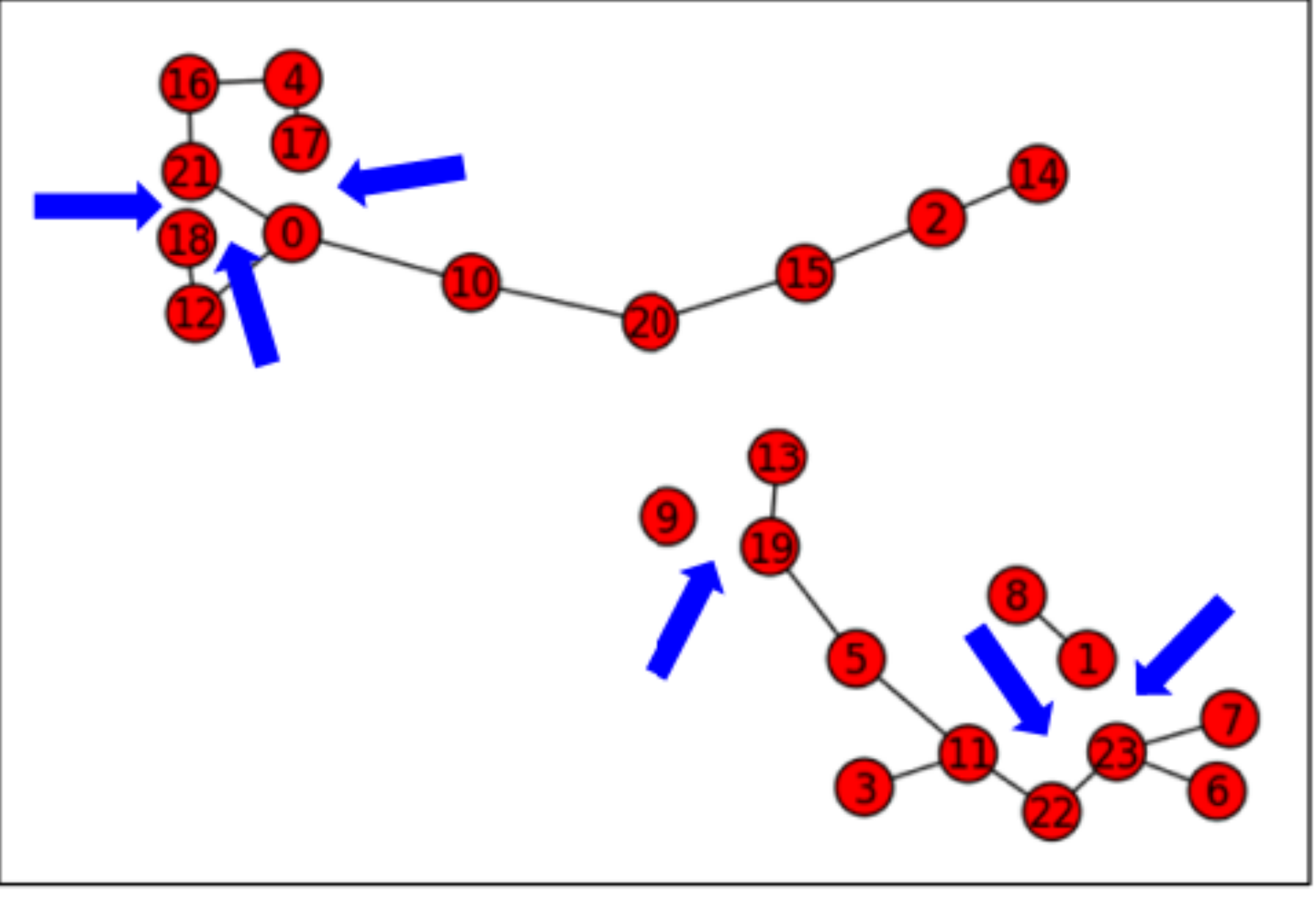}
	\caption{Approximate graph with only strong interactions}
	\label{fig:aprox}	
\end{figure}

\subsection{\label{sec_lgdecom}Example of linear decomposition}

\begin{hdn}
	\color{red}
	\begin{verbatim}		
	R:ブラックボックスげー。何を知っているか
	分割できた
	分割の自由度
	近似も可能
	ナッシュが簡単に求まる
	\end{verbatim}
	\color{black}
\end{hdn}

In this section, an example is shown to describe how a linear decomposition helps to find a consistent equilibrium. The input game $\md{G}=<\{1,2,3\},\md{A},\vect{u}>$ has a interaction graph similar to left one in Figure \ref{fig:divp}. The game is a linear combination of two partial games $\md{G}_2=<\{1,2\},\md{A}_2,\vect{u}_2>$ and $\md{G}_3=<\{1,3\},\md{A}_3,\vect{u}_3>$. The payoff matrices $\vect{u}_2$ and $\vect{u}_3$ are generated in a similar way as in the last section and are shown in Figure \ref{fig:uorg}. The utility function of $\md{G}$ is a linear combination $\vect{u}(\vect{a})=\vect{u}_2(\vect{a})+\vect{u}_3(\vect{a})$. Though $\vect{u}(\vect{a})$ is known input, this decomposition is unknown to the proposed method.

Now let's consider the problem of finding the Nash equilibria of $\md{G}$. Actually, this game has three Nash equilibria which are $\{(0, 2, 0),(0, 2, 1),(2, 0, 0)\}$. If the true decomposed payoff matrices in Figure \ref{fig:uorg} are unknown, finding those equilibria requires searching the whole space of strategy profile $\md{A}$, which has size $3^3=27$. However, the proposed method in Section \ref{sec_lg} enables finding an equilibrium with smaller complexity.

The method requires computing a linear decomposition of $\md{G}$ which is $\{\md{G}_2',\md{G}_3'\}$. This is actually computing a decomposition of utility $\vect{u}(\vect{a})=\vect{u}_2'(\vect{a})+\vect{u}_3'(\vect{a})$ without knowing the true decomposition in Figure \ref{fig:uorg}. The way to decompose the utility $\vect{u}(\vect{a})$ has a degree of freedom.

Figure \ref{fig:ugood} shows one of the possible decompositions. The Nash equilibria of the decomposed games $\{\md{G}_2',\md{G}_3'\}$ are $\{(0,2),(2,0)\}$ and $\{(2,0)\}$, respectively. In this case, a profile $(2,0,0)$ satisfies both decomposed games and then is a consistent equilibrium of the input game $\md{G}$ and is one of the Nash equilibria. Since the computation required to find this consistent equilibrium is calculating Nash equilibria of decomposed games, the complexity is $2*3^2=18$, which is smaller than the original complexity of $3^3=27$.

A consistent equilibrium requires a proper linear decomposition. Figure \ref{fig:ubad} shows another decomposition of the input game. The Nash equilibria of the decomposed games $\{\md{G}_2',\md{G}_3'\}$ are $\{(0,2)\}$ and $\{(2,0),(2,2)\}$ respectively. In this case, no profile of $\md{G}$ satisfies both decomposed games and thus it fails to find the consistent equilibrium.

\begin{figure}[h]
	\centering
	\includegraphics[clip,width=\fsize]{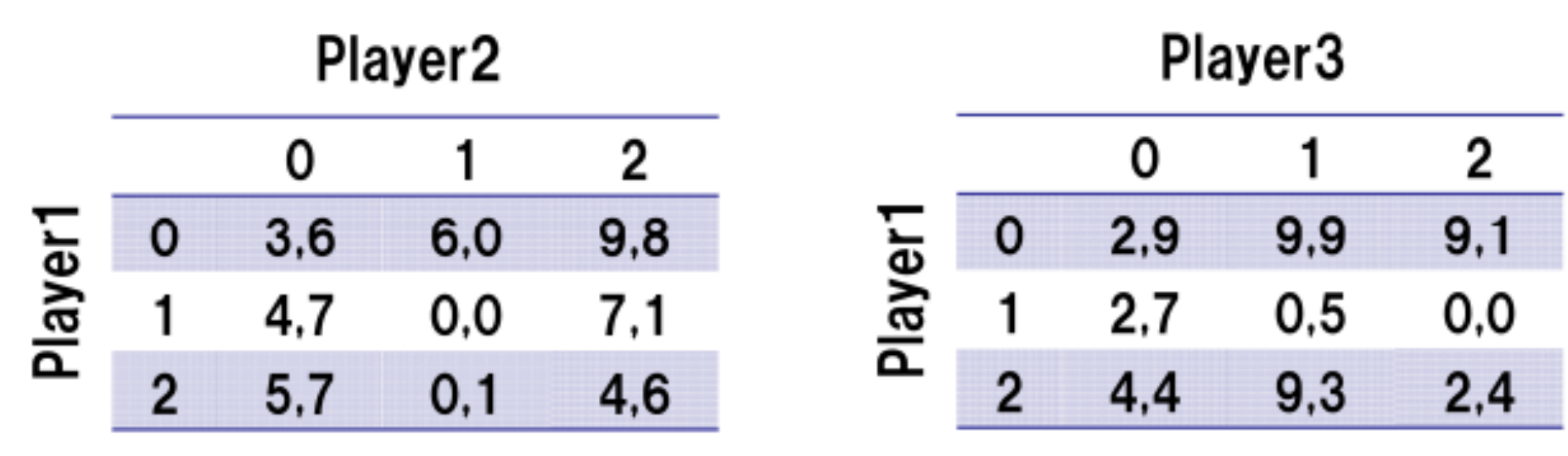}
	\caption{True payoff matrices of input games}
	\label{fig:uorg}	
\end{figure}

\begin{figure}[h]
	\centering
	\includegraphics[clip,width=\fsize]{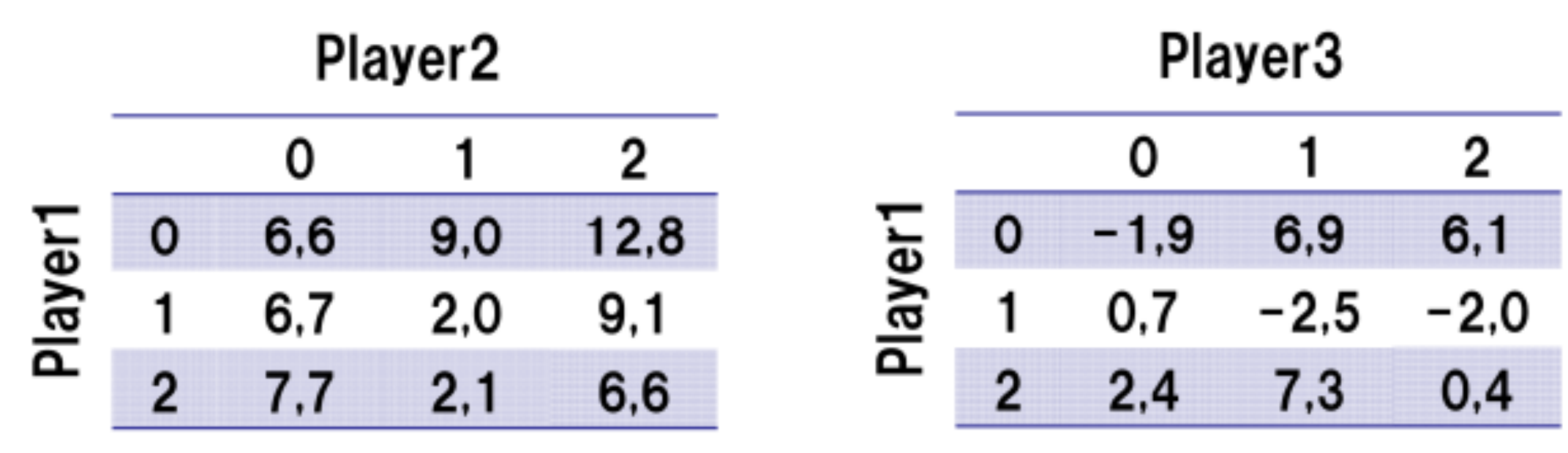}
	\caption{Decomposition with consistent equilibrium}
	\label{fig:ugood}	
\end{figure}

\begin{figure}[h]
	\centering
	\includegraphics[clip,width=\fsize]{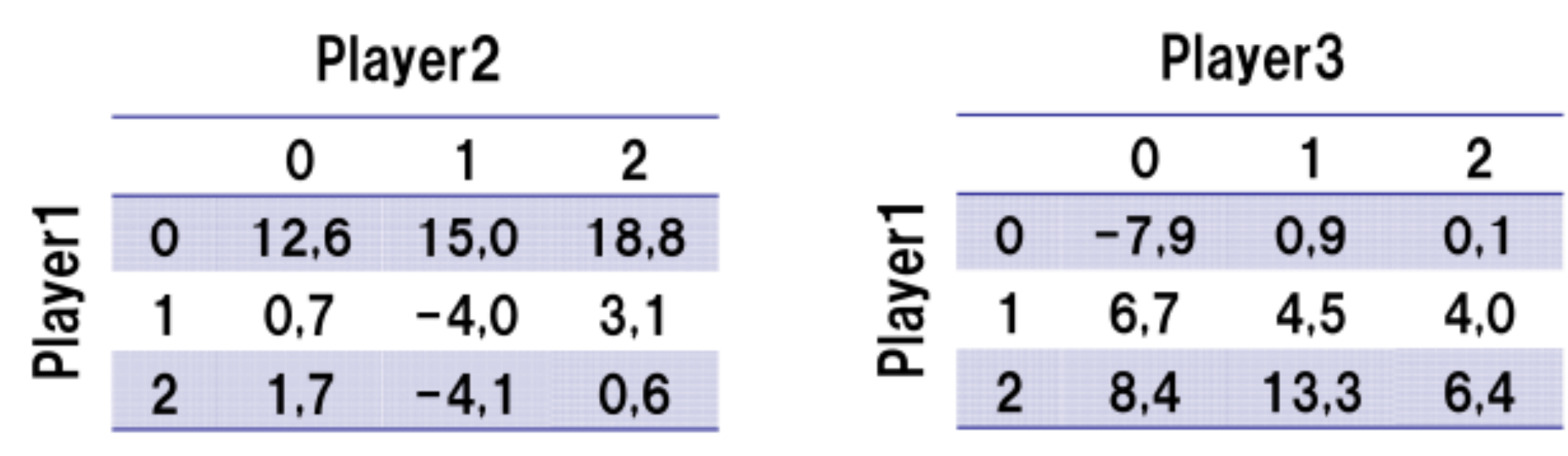}
	\caption{Decomposition without consistent equilibrium}
	\label{fig:ubad}	
\end{figure}

%%%%%%%%%%%%%%%%%%%%%%%%%%%%%%%%%%%%%%%%%%%%%%%%%%%%%%%%%%%%%

\section{\label{sec_conclude}Conclusion}

\begin{hdn}
	\color{red}
	\begin{verbatim}
	メリット
	#一般的。どんなゲームでも
	#社会システムは複数の人々のゲームとして記述できる場合が多い。
	#それらに汎用的に利用できる技術である
	#SO,ME,NE全部効く
	#ほかの高速化手段と組み合わせが可能
	応用先：近似、or ブラックボックスな場合。メカデザに使えるかも。複雑な内生変数を持つゲーム
	線形結合というsuccinctな表現を提案した
	
	#RA:シェアげーで確認
	#線形代数的にゲームを分割する研究はない。新規
	#線形のメリットは全て今後の話。…できる可能性がある
	#線形結合の意味
	#線形代数理論のフル活用
	線形結合ゲーの均衡
	スパースモデリングによる少量データでの独立検出
	戦略：分割した小規模ゲームの効率的解法
	利得行列と固有空間C=PLP^-1の相互変換定理の証明
	#ゲーム理論を固有空間法をはじめとする線形理論と接続できたことを意味する。線形理論の各ツールを利用できる可能性がある
	\end{verbatim}
	\color{black}
\end{hdn}

For the purpose of reducing computational complexity of large complex games, this study proposes a method to identify an interaction graph of the input game by constructing an interaction matrix between players and solving its eigenvalue problem. Additionally, a technique of linear decomposition of games is proposed to find consistent equilibria, which are Nash equilibria specified by the decomposition, with reduced computational complexity.

The proposed method is applicable to a wide range of game classes described with utility functions. Real world problems in the field of city planning, transportation, environment, and economics are generally modeled as interactions between people, so the applications of the proposed method are full of variety. Also, since the proposal is a simple method that decomposes large games into smaller ones, it can be combined with other methods that also reduce computational complexity.

Future studies include validation with real world problems and effective computation of each game decomposed by the proposed method. This study shows a method for connecting game theory with linear algebra. Since linear algebra has been applied to variety of areas and has many useful tools, the connection has the potential to evolve game theory further.

%\bibliographystyle{jsai}
%%\bibliographystyle{mgtex2014}
%%\bibliographystyle{plain}
%\bibliography{btxsample}

\end{document}